\documentclass[submission,copyright,creativecommons]{eptcs}
\providecommand{\event}{GASCom 2026} 

\usepackage{iftex}

\ifpdf
  \usepackage{underscore}         
  \usepackage[T1]{fontenc}        
\else
  \usepackage{breakurl}           
\fi
\usepackage{tikz} 
\usetikzlibrary{arrows}
\usepackage{amsmath,amsthm}
\usepackage{todonotes,cancel}
\usepackage{subcaption}
\usepackage{makecell}
\usepackage{soul}
\usepackage{refcount}
\usepackage{xspace}
\usepackage{amsfonts,amssymb}
\newcommand{\ie}{\emph{i.e.}\@\xspace}
\makeatletter
\newcommand{\stackover}{\genfrac{.}{.}\z@{}}
\makeatother

\DeclareMathOperator{\lcm}{lcm}
\newcommand{\zo}{\set{0,1}}
\newcommand{\structure}[1]{\ensuremath{\left\langle#1\right\rangle}\xspace}

\newcommand{\set}[1]{\ensuremath{\left\{#1\right\}}\xspace}
\newcommand{\N}{\ensuremath{\mathbb{N}}\xspace}
\newcommand{\Z}{\ensuremath{\mathbb{Z}}\xspace}
\newcommand{\ZO}{\ensuremath{\set{0,1}}\xspace}
\newcommand{\cvres}[2]{\ensuremath{\begin{bmatrix}#1\\#2\end{bmatrix}}\xspace}
\newcommand{\cz}{\ensuremath{\underline{0}}\xspace}
\newcommand{\cu}{\ensuremath{\underline{u}}\xspace}
\newcommand{\cau}{\ensuremath{\underline{1}}\xspace}

\newcommand{\fcxv}[1]{\ensuremath{f\left(#1\right)}\xspace}
\newcommand{\ftcxv}[2]{\ensuremath{f^{#1}\left(#2\right)}\xspace}
\newcommand{\RESOCA}{RESOCA\xspace}
\newtheorem{question}{Question}

\newtheorem{proposition}{Proposition}

\newtheorem{theorem}{Theorem}
\newtheorem{corollary}{Corollary}
\newtheorem{conjecture}{Conjecture}
\newtheorem{lemma}{Lemma}
\newtheorem{remark}{Remark}
\newcommand{\LF}{\ensuremath{\delta}\xspace}
\newcommand{\GF}[1]{\ensuremath{G_{#1}}\xspace}
\title{The Curious Case of Reversible Elementary Second Order Cellular Automaton $115$}
\author{Enrico Formenti
\institute{Universit\'e C\^ote d'Azur, CNRS, i3s\\ Nice, France}
\email{enrico.formenti@univ-cotedazur.fr}
\and
Supreeti Kamilya
\institute{Department of Computer Science and Engineering,\\ Birla Institute of Technology, Mesra, Ranchi, India}
\email{supreeti.kamilya17@bitmesra.ac.in}
}
\newcommand{\titlerunning}{The Curious Case of RESOCA $115$}
\newcommand{\authorrunning}{E. Formenti \& S. Kamilya}

\hypersetup{
  bookmarksnumbered,
  pdftitle    = {\titlerunning},
  pdfauthor   = {\authorrunning},
  pdfsubject  = {EPTCS},               
}
\begin{document}
\maketitle

\begin{abstract}
We prove that the reversible elementary second order cellular automaton rule $115$
is periodic when started on finite initial configurations. We also study some families of finite configurations that have interesting period functions.
\end{abstract}
\section{Introduction}
Cellular automata (CA) are a formal model widely
used in the simulation of complex systems
with application domains ranging from Biology
to Mathematics and Computer Science. 

CA can be seen as a collection of finite automata
arranged on a regular grid. Each automaton
updates its state on the basis of its own state and of the one of a fixed finite set of
neighboring automata. All automata have the same transition function and update their
state in parallel.

Higher order Cellular Automata (HOCA) have
been introduced to include in the local rule
time dependencies for modeling time dependent phenomena. The dynamical behavior
of the class of CA is mostly still unexplored. Only a few results are known and they concern injectivity and surjectivity
properties~\cite{bruyn1991,kari2000}; or a specific subclass~\cite{DennunzioFMMP19}. More recently, a systematic exploration
of HOCA in which memory has been fixed to
$2$ (\ie they remember their previous state)
has started in~\cite{formenti2024elementary}. From one end, one may remark that these systems can be seen as standard CA over a larger alphabet (see~\cite{DennunzioFMMP19}) but
adopting the larger alphabet hides all details about the micro-behavior which might be quite interesting in applications.
From that preliminary explorations the authors came up with the class of reversible elementary second order CA (RESOCA) and more specifically on the one based on the elementary rule $115$. This has absolutely puzzling behavior which might result useful
in practical applications. The main contribution of this paper is that all finite configurations have periodic behavior. However, the function which relates an initial (finite) configuration with its period seems to be quite complex and it is not fully understood yet. We show some families of initial configurations that have simple periodic functions and one which seems to have a quite complex periodic function with exponential growth with respect to the size of the initial
configuration.
\smallskip

The paper is structured as follows. The next section introduces all the main notions
used in this paper. Section~\ref{RESOCA115}
discusses the behavior of RESOCA $115$
and the many questions issued by it.
The following section contains the proof of results in Section~\ref{RESOCA115}.
The \emph{merge construction}, one of the main
tools used to generate new periods from
finite configurations is briefly discussed
in Section~\ref{sec:merge-construction}.
Section~\ref{sec:families} studies some families of initial (finite) configurations and their period. In the last section, we draw our conclusions and propose some perspectives.
\medskip

Due to space constraints, most proofs and tables have been omitted. They will be included in the journal version of this work.

\section{Preliminaries}
\subsection{Elementary Higher Order CA}
An \textbf{elementary higher order CA} (EHOCA) consists of a tuple \structure{S,m,r,\delta} where $S=\zo$ is the
\textbf{set of states} of the CA; $m\in\N$ is the \textbf{memory} (also called the \textbf{order}); $r\in\N$ is
the
\textbf{radius} and accounts for how many left and right neighbors are taken into consideration for updating the state of the current cell. Finally, $\delta\colon\left(\zo^{2r+1}\right)^m\to\left(\zo\right)^m$ is the transition function (also called \textbf{local rule}) that updates the state of the current cell according to
the content of the neighborhood and of the memory. 

A local function \LF induces a \textbf{global function}
$\GF{\delta}\colon\left(\zo^\Z\right)^m\to\left(\zo^\Z\right)^m$
that demonstrates the evolution of the current configuration in the next time step. An elementary second order CA sets the value of $m$ to $2$. More precisely, for elementary second order CA, $\GF{\delta}$ associates
the vector $\Vec{e}=(e^1,e^2)\in\left(\zo^\Z\right)^2$ with the vector $\GF{\delta}\left(\Vec{e}\right)\in\left(\zo^\Z\right)^2$ 
where $\GF{\delta}\left(\Vec{e}\right)^1_i=e^2_i$ and
\[
\GF{\delta}\left(
\Vec{e}
\right)^2_i
=
\delta\left(
\begin{bmatrix}
(e^1_{i-1},e^1_i,e^1_{i+1})\\
(e^2_{i-1},e^2_i,e^2_{i+1})
\end{bmatrix}
\right)
\]
for all $i\in\Z$. In other words, the new vector at position $i$ is obtained by
putting the old value of component $2$ to component one, and
computing the updated value of the new component $2$ using the local rule 
$\delta$.

An EHOCA is \textbf{injective} if its global transition function $\GF{\delta}$ is one-to-one. If the function is onto, the EHOCA is \textbf{surjective}. An EHOCA is \textbf{bijective} if $\GF{\delta}$ is both one-one and onto. We recall that
for any EHOCA, there exists a standard CA which is topologically conjugated
to it~\cite{DennunzioFMMP19}. Hence, we call
a EHOCA \textbf{reversible} if and only if its conjugated CA is reversible.
By a result in~\cite{hedlund1969}, the class of reversible EHOCA coincides
with the class of bijective EHOCA. 

The number of local rules of second order CA is quite huge. It goes beyond the scopes of this paper the dynamics all of them
in a single paper. In this work, we propose to focus our attention on a special subclass of EHOCA rules $\delta_n'$
that are \textbf{induced by an elementary cellular automata rule $\boldsymbol\delta_{\boldsymbol n}$},
more precisely :
\begin{equation}
\label{eq:EHOCA-subset}
\delta'_n\left(
\begin{bmatrix}
    a,b,c\\
    d,e,f
\end{bmatrix}
\right)=
\begin{bmatrix}
    e\hfill\\
b+\delta_n(d,e,f)\mod 2
\end{bmatrix}
\end{equation}
for all $a,b,c,d,e,f\in\zo$.
We call this special class of second order elementary CA as \textbf{\RESOCA}~\cite{formenti2024elementary} defined by Equation~\eqref{eq:EHOCA-subset}.

A \textbf{configuration} is a function from \Z to \ZO. A configuration is \textbf{finite} iff only a finite number of cells have a state different from $0$. For a finite configuration $c$, define $\min c=\min\set{i\in\Z\;|\;c_i=1}$
and $\max c=\max\set{i\in\Z\;|\;c_i=1}$. The
\textbf{length} of $c$ is denoted $|c|$ and is
given by $|c|=|\max c-\min c|+1$.
A configuration $\vec c=\cvres{b}{c}$ is \textbf{finite} if both $b$ and $c$ are finite.
For $\vec c=\cvres{b}{c}$ define $\min\vec c=\min(\min b,\min c)$ and $\max\vec c=\max(\max b,\max c)$, the length of $\vec c$
is given by
$|\vec c|=|\max\vec c-\min\vec c|+1$.

In this paper, we are going to make a large usage of some specific finite configurations for which we are going to introduce a specific notation. We denote \cz a configuration in which all cells are in state $0$; \cu a configuration in which all states are in state $0$ except for the cell at position $0$ that is in state $1$.

\subsection{Elementary Cellular Automata}
\label{sec:ECA}
An EHOCA is an \textbf{elementary CA} (or ECA) if memory $m=1$. A cell of an elementary CA changes its state based on the states of itself, its left neighbor and its right neighbor. Hence,
the local rule $\delta$ can be conveniently represented in tabular form (see for example Table~\ref{table:rule115}). 

\begin{table}[]
\centering
\begin{tabular}{lllllllll}
\hline
    & 111 & 110 & 101 & 100 & 011 & 010 & 001 & 000 \\ \hline
$f$ &\phantom{0}0   &\phantom{0}1   &\phantom{0}1   &\phantom{0}1   &\phantom{0}0   &\phantom{0}0   &\phantom{0}1   &\phantom{0}1   \\ \hline
\end{tabular}
\caption{The tabular representation of the ECA rule 115.}
\label{table:rule115}
\end{table}

Interpreting the output row of the tabular 
representation
of a local rule as a decimal number written in binary gives the well-known
\textbf{Wolfram's number}. Indeed, in the sequel of this paper we will refer to
a specific ECA by its Wolfram's number $n$ and refer to its local rule by $\delta_n$. 

There are total $256$ distinct ECA rules. The rules exhibit different dynamical behaviour, from simple to chaotic. 
The ECA rules has been intensively investigated and several classifications of their dynamics have appeared over the years~\cite{Braga1995,kurka1997languages,culik1988undecidability,schule2012full,EnricoTCS}. 

These $256$ rules are the basis for forming the second order CA rules studied in~\cite{formenti2024elementary}.
During the previously mentioned study we came across the rule $115$, which has a very curious behavior that motivated the present work.

\section{The RESOCA $115$}
\label{RESOCA115}

Given a RESOCA $f$, a configuration \cvres{m}{c} is \textbf{periodic} iff there exists an integer $p$ such that $f^p\left(\cvres{m}{c}\right)=\cvres{m}{c}$; the smallest $p$ satisfying the previous property is the \textbf{period} of \cvres{m}{c}. We say that $p$ is a \textbf{period for} \RESOCA $115$ if there exists a finite configuration which has period $p$ for \RESOCA $115$.

\begin{theorem}
\label{th:all-finite-cfg-are-periodic}
    Any finite configuration is periodic for RESOCA 115.
\end{theorem}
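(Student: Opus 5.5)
The plan is to show that the orbit of a finite configuration $\vec c=\cvres{m}{c}$ stays inside a finite set of configurations, and then use injectivity of $\GF{\delta'_{115}}$ to conclude periodicity. Here ``finite'' should be read as a finite perturbation of the background, because the all-zero configuration is not a fixed point. Since $\delta_{115}(000)=1$ and $\delta_{115}(111)=0$, the background evolves as
\[
\cvres{0}{0}\mapsto\cvres{0}{1}\mapsto\cvres{1}{0}\mapsto\cvres{0}{0},
\]
a spatially uniform cycle of length $3$. So the first step is to fix this convention: $\vec c$ is compared at each time $t$ with the uniform background configuration $\beta_t$. The \emph{defect set} at time $t$ is the set of cells where $f^t(\vec c)$ differs from $\beta_t$, and this set is finite at $t=0$.

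\textbf{Step 1: rewrite the rule.} Write $\delta_{115}=1+\delta_{140}$ with $\delta_{140}(x,y,z)=y\wedge(\neg x\vee z)$. The update then reads $c'_i=m_i+1+c_i\wedge(\neg c_{i-1}\vee c_{i+1})$ and $m'_i=c_i$.

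\textbf{Step 2: the right edge is a wall.} Suppose cell $i+1$ and everything to its right agree with the background, and let $y$ be the background value of the second component at that time.
\begin{itemize}
\item If $y=0$, the output at $i+1$ is $0$ whatever its neighbours are.
\item If $y=1$, then $z=c_{i+2}=1$, so $\neg x\vee z=1$ independently of $x=c_i$.
\end{itemize}
In both cases cell $i+1$ follows the background, so the defect set never grows to the right. By induction on $t$, $\max$ of the defect set is non-increasing.

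\textbf{Step 3: bound the left edge.} This is the main obstacle. At the left edge the background cell $i-1$ sees $x=y$. When $y=1$, the output there equals $c_i$, which may be perturbed, so defects can move left. I would try two things.
\begin{itemize}
\item \emph{Reversibility trick.} The inverse map is $m=c'+\delta_{115}(c)$, $c\mapsto m'$, which has the same second-order shape. Running it (and possibly conjugating by the reflection $i\mapsto -i$ together with complementation $0\leftrightarrow1$, which sends rule $140$ to a rule with the mirrored wall property) should give a wall on the left as well. One must then turn ``no growth backward in time on the left'' into a forward bound.
\item \emph{Front analysis, if the trick fails.} Study the leftmost defect column as a finite-state process. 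Track the pair (phase of the background mod $3$, the states in a window of width $2$ at the left front) and show that after at most a bounded number of steps the front is pushed back or absorbed. This would give a linear bound $\min\ge \min\vec c - K$ for a constant $K$ depending only on $|\vec c|$; the decreasing quantity would be the number of $1$'s near the front.
\end{itemize}

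\textbf{Step 4: conclude.} Once the defect set is confined to a fixed window $[a,b]$ for all $t\ge0$, the orbit lies in the finite set of configurations that equal $\beta_t$ outside $[a,b]$, with at most $3\cdot 4^{\,b-a+1}$ possibilities once the phase of $t$ mod $3$ is included. Hence some configuration repeats, say $f^{s}(\vec c)=f^{s+p}(\vec c)$ with $p\ge1$. Since $\GF{\delta'_{115}}$ is injective ($m$ is recovered from $c'-\delta_{115}(c)$), cancelling $f^s$ gives $f^p(\vec c)=\vec c$, so $\vec c$ is periodic.
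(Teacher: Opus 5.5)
\paragraph{Gap: Step 3 is left open.} Your Steps 1, 2 and 4 are correct and follow the paper's skeleton: a right wall as in Lemma~\ref{lem:periodicity-2}, then confinement to a finite window plus injectivity. But Step~3 is the heart of the proof, and you leave it unresolved, so Step~4 has no window to work with.

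Your first route cannot succeed.
\begin{itemize}
\item Write the map as $(x,y)\mapsto(y,\,x+\delta_{115}(y))$ (memory, current). Then $f^{-1}=SfS$ with $S(x,y)=(y,x)$. Since $S$ acts cellwise and maps backgrounds to backgrounds, the inverse has exactly the same right wall and no left one.
\item No reflection or complementation symmetry can produce an exact left wall, because none exists. Starting from \cvres{\cz}{\cu}, cell $-1$ already differs from the background at time $2$.
\item Write $L_t$ for the leftmost defect. A left wall for $f^{-1}$ would only give $L_t\le L_{t-1}$, i.e.\ defects may only drift leftwards. That is the wrong direction unless periodicity is already known.
\end{itemize}
Your second route is only a plan. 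The left front is not an autonomous finite-state process, since it is driven by cell $\min\vec c$ and hence by the whole interior. You also exhibit no decreasing quantity.

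\paragraph{The missing idea.} What is needed is a mod-$3$ invariant at the column just left of the support. This is the content of the paper's Lemmas~\ref{lem:periodicity-1}, \ref{lem:periodicity-3} and~\ref{lem:periodicity-4}. It shows the left edge moves by at most one cell, so your constant is $K=1$.

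Let $a=\min\vec c$, and let $d^t$ be the current component at time $t$, with $d^{-1}=m$ and $d^0=c$. Claim: $d^t_{a-1}=1$ for all $t\ge 1$ with $t\equiv1\pmod 3$, and every column $j\le a-2$ is background at all times.
\begin{itemize}
\item Base case: $d^1_{a-1}=m_{a-1}+1+c_{a-1}\wedge(\cdots)=0+1+0=1$.
\item Updates producing times $\equiv0,1\pmod 3$: the background centre is $0$, so background cells output $m_j+1$ whatever their neighbours, and stay background.
\item Update from $t\equiv1$ to $t+1$: cell $a-2$ sees $(1,1,1)$ because $d^t_{a-1}=1$, so it stays background.
\item Propagating the invariant: since $d^{t+1}_{a-2}=d^{t+2}_{a-2}=0$, your Step~1 formula gives $d^{t+2}_{a-1}=d^t_{a-1}+1+d^{t+1}_{a-1}=d^{t+1}_{a-1}$. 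It then gives $d^{t+3}_{a-1}=d^{t+1}_{a-1}+1+d^{t+2}_{a-1}=1$. This holds regardless of what happens at cell $a$.
\end{itemize}
Hence the orbit stays in the window $[\min\vec c-1,\max\vec c]$, and your Step~4 concludes as in the paper. The paper carries out this same computation split into three cases according to $(m_a,c_a)$.
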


The proof of the previous theorem is long and a bit tedious, and we leave it for the next section.
The fact that RESOCA $115$ has a periodic behavior on finite configurations should not mislead us into thinking that the dynamics is very simple. None of this is true. We will see that there exist classes of configurations which
present very complex sequences of periods, most of which are not yet fully understood. Let us start exploring some conditions that periods should satisfy.

\begin{lemma}
\label{lem:nec-cond-period}
For RESOCA 115,
every periodic finite configuration has a period $p$ such that $p\geq 3$ and $p$ is a multiple of $3$.
\end{lemma}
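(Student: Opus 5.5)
The plan is to exploit the fact that, although a finite configuration has only finitely many non-zero cells, the quiescent background $\cz$ is \emph{not} a fixed point of RESOCA $115$. By locality, the cells far from the support must follow the evolution of the background, and this forces the period to be a multiple of the background's period.

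\textbf{Step 1: the background cycles with period exactly $3$.} From Table~\ref{table:rule115} we have $\delta_{115}(0,0,0)=1$ and $\delta_{115}(1,1,1)=0$. For a spatially uniform configuration $\cvres{x}{y}$ (all cells equal $x$ in the first component and $y$ in the second), the second component evolves by $y' = x+\delta_{115}(y,y,y) \bmod 2$, and the first becomes $y$. Starting from $\cvres{\cz}{\cz}$, the iterates are
\[
\cvres{0}{0}\to\cvres{0}{1}\to\cvres{1}{0}\to\cvres{0}{0}
\]
(uniformly on $\Z$), using $0+1=1$, $0+0=0$ and $1+1=0$. So the uniform zero configuration has least period $3$. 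In particular, $f^t$ applied to it returns $\cvres{\cz}{\cz}$ if and only if $3\mid t$.

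\textbf{Step 2: locality.} The rule has radius $1$. Hence the state of cell $i$ after $t$ steps depends only on the initial states of cells in $[i-t,i+t]$, in both memory layers. Let $\vec c$ be a finite configuration with period $p\ge 1$. Pick a cell $i$ with $i>\max\vec c+p$. Then every cell in $[i-p,i+p]$ is $0$ in both components of $\vec c$, exactly as in the all-zero configuration. Consequently $f^p(\vec c)_i=f^p\bigl(\cvres{\cz}{\cz}\bigr)_i$, which is the uniform background state at time $p$.

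\textbf{Step 3: conclusion.} Periodicity gives $f^p(\vec c)_i=\vec c_i=\cvres{0}{0}$. By Step 1, the background state at time $p$ equals $\cvres{0}{0}$ only when $3\mid p$. Since $p$ is a positive integer, this yields $p\ge 3$. The argument applies to any $p$ with $f^p(\vec c)=\vec c$, and in particular to the least period.

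I expect no real obstacle; the only point needing care is the exact computation of the background orbit in Step 1. That computation is where the specific values $\delta_{115}(0,0,0)=1$ and $\delta_{115}(1,1,1)=0$ enter, and a different rule would give a different background period.
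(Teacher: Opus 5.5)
Your proof is correct and takes essentially the same approach as the paper. The paper's proof is omitted, but its text indicates the same idea: the all-zero configuration $\cvres{\cz}{\cz}$ produces the period-$3$ background, and cells of a finite configuration far enough from its support must reproduce that background, which forces $3\mid p$ (for $\vec c=\cvres{\cz}{\cz}$, where $\max\vec c$ is undefined, your Step~1 alone already gives period $3$).
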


We call \textbf{background} the particular sequence of states that we can see in a column of the space time diagram of the configuration \cvres{\cz}{\cz} used in the proof of the previous lemma: $[001001001001\dots]^T$ (remark that all the columns are the same for this particular space-time diagram).
Therefore, what the previous lemma
says is that any finite configuration
in a sense synchronizes his period
according to the background, forcing
it to be a multiple of $3$. Hence,
the following is a natural question.

\begin{question}
\label{question:are-all-n-congruent-to-0-periods}
For \RESOCA $115$, in the context of finite configurations, are all multiple of $3$ periods?
\end{question}
Remark that the product of two integers that are multiple of $3$ is a multiple of $3$. This means that if $p$ is a period, then $p^2$ respects the necessary conditions given in Lemma~\ref{lem:nec-cond-period}. Hence, the question follows:
\begin{question}
If $p$ is a period for \RESOCA $115$, is $p^2$ also a period?
\end{question}
We can not answer the previous question yet, but we can answer the following close version of it. 
\begin{proposition}
    \label{mergeperiod}
    Consider two finite configurations of
    period $p$ and $q$, respectively.
    Then $\lcm(p,q)$ is also a period. 
    \end{proposition}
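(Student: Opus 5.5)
We plan to use the \emph{merge construction} mentioned in the introduction: place the two finite configurations far apart on the line, separated by a long stretch of background, and show that they evolve independently. Let $\vec c=\cvres{b}{c}$ have period $p$ and $\vec d=\cvres{b'}{d}$ have period $q$.

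The first step is to bound the spatial extent of their orbits. By Theorem~\ref{th:all-finite-cfg-are-periodic}, each orbit is finite: $\vec c$ visits only $p$ configurations and $\vec d$ only $q$. Each of these configurations is finite, with ``finite'' meaning that outside some window every column follows the background $[001\dots]^T$. In the proof of Lemma~\ref{lem:nec-cond-period} the all-zero configuration \cvres{\cz}{\cz} evolves into the background, so the quiescent reference state is really the background phase. We therefore fix one bounded interval $I_c$ outside which every configuration in the orbit of $\vec c$ agrees with the background at the matching time, and a similar interval $I_d$ for $\vec d$.

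Next we translate $\vec d$ by a shift $s$ large enough that $I_c$ and $I_d+s$ are separated by more than two cells, and set $\vec e$ to be $\vec c$ on and near $I_c$, $\sigma^s\vec d$ on and near $I_d+s$, and \cvres{\cz}{\cz} elsewhere. The key claim is that for every $t$, the configuration $f^t(\vec e)$ equals $f^t(\vec c)$ on a neighbourhood of $I_c$ and $f^t(\sigma^s\vec d)$ on a neighbourhood of $I_d+s$, with background in between. We prove this by induction on $t$, using locality of the rule (radius $1$). The rule for cell $i$ reads only positions $i-1,i,i+1$ of the second layer and position $i$ of the first. Every cell in the gap therefore sees only background and stays background, and every cell inside a window sees exactly what it would see in the isolated evolution. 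The point needing care is that the background is time-dependent but spatially uniform, with the same phase for both configurations, since both start from the same reference \cvres{\cz}{\cz} outside their supports. So it glues consistently. This phase compatibility is the step I expect to be the main obstacle. If the reference state differs, we would first apply a power of $f$ that is a multiple of $3$ to align the phases, which does not change periods.

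From the key claim, $f^t(\vec e)=\vec e$ holds exactly when $f^t(\vec c)=\vec c$ and $f^t(\vec d)=\vec d$. The background in the gap has period $3$, and $3$ divides both $p$ and $q$ by Lemma~\ref{lem:nec-cond-period}, so it imposes no extra constraint. The set of such $t$ is the set of common multiples of $p$ and $q$, hence the period of $\vec e$ is $\lcm(p,q)$. Since $\vec e$ is finite, $\lcm(p,q)$ is a period for \RESOCA $115$.
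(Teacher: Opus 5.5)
Your proposal is correct and is essentially the paper's own argument, the merge construction of Section~\ref{sec:merge-construction}: separate the two configurations by a strip of background, glue the two isolated evolutions using radius-one locality, and use $3\mid p$ and $3\mid q$ to see that the gap adds no constraint. The differences are minor: the paper gets the confinement window $[\min\vec c-1,\max\vec c]$ from Lemmas~\ref{lem:periodicity-1}--\ref{lem:periodicity-4}, so a gap of a few cells suffices, whereas you get a possibly larger window from finiteness of the orbit plus light cones, which is equally valid; and the phase compatibility you flag as the main obstacle is automatic, because both configurations are finite and so coincide with $\cvres{\cz}{\cz}$ outside their supports at time $0$ (your fallback would not help anyway, since $f^{3k}$ leaves the background phase unchanged).
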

The proof of the previous result is obtained via the
\emph{merge construction} that
is explained in Section~\ref{sec:merge-construction}.
In order to find arguments for answering to Question~\ref{question:are-all-n-congruent-to-0-periods} we made several experiments which are reported in the following sections. All experiments were based on configurations of the type \cvres{\cu}{c} where $c$ is a finite configuration.
In this way, many periods were discovered 
: $3, 6, 12, 15, 27, 30, 54, 60, 63, 123, \ldots$ Two periods $p$ and $q$ are \textbf{twins} if $q=p+3$. For example, in the previous sequence the periods $3$ and $6$ are twins as well as $12$ and $15$.
We do not know whether all periods are twins, but twins seem
to satisfy the following.
\begin{conjecture}
Assume that $p$ and $p+3$ are twins periods. Then, the next twins pair is $p'$ and $p'+3$, where $p'$ is given by
\[
p'=2p+\frac{9-3(-1)^p}{2}
\enspace.
\]
\end{conjecture}
We cannot prove the conjecture but
a weaker statement holds.
\begin{proposition}
\label{prop:twin-periods-new-period}
If $p$ and $p+3$ are twin periods, then
\[
p'=\begin{cases}
2p &\text{if $p$ is odd}\\
2p+6&\text{otherwise.}
\end{cases}
\]
is also a period.
\end{proposition}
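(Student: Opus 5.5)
The natural tool is Proposition~\ref{mergeperiod}. If $p$ and $p+3$ are both periods, then their least common multiple is also a period, so the statement should reduce to a computation of $\lcm(p,p+3)$. By Lemma~\ref{lem:nec-cond-period}, $p$ is a multiple of $3$. Write $p=3k$ with $k\ge 1$, so $p+3=3(k+1)$. Then
\[
\lcm(p,p+3)=3\,\lcm(k,k+1)=3k(k+1),
\]
because $\gcd(k,k+1)=1$.

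I then want to compare $3k(k+1)$ with the claimed value.

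\textbf{Case $p$ odd.} Here $p'=2p=6k$. We have $3k(k+1)=6k$ exactly when $k=1$, that is, $p=3$, and then $p'=6$. For larger odd $p$, for instance $p=27$, we get $\lcm(27,30)=270\neq 54$. So the $\lcm$ route alone does not give $2p$.

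\textbf{Case $p$ even.} Here $p'=2p+6=2(p+3)$. The $\lcm$ gives $2(p+3)$ only when $k=2$, that is, $p=6$; for example $\lcm(12,15)=60$ and $2\cdot 12+6=30$.

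So the core of the argument must be a construction. I would use the merge construction of Section~\ref{sec:merge-construction} directly, with the two twin configurations $\vec a$ of period $p$ and $\vec b$ of period $p+3$, rather than only citing the $\lcm$ statement. The target periods $2p$ and $2(p+3)$ suggest doubling one of the configurations.

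My concrete plan is as follows. Place $\vec a$ and $\vec b$ side by side, separated by a gap of background, and examine their interaction. The guiding idea is that the spreading activity of each pattern meets the other pattern every $p$ (respectively $p+3$) steps. The combined system should return to its initial state after two revolutions of the shorter-period pattern when $p$ is odd, and after two revolutions of the longer-period pattern when $p$ is even. The parity of $p$ enters through the phase of the period-$3$ background: $p\equiv 3 \pmod 6$ or $p\equiv 0\pmod 6$ decides which pattern comes back in phase with the background after doubling.

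The key steps, in order, are:
\begin{enumerate}
\item Fix the merge construction with a specific offset between the two patterns.
\item Show that the interaction region is confined, using the locality argument from the proof of Theorem~\ref{th:all-finite-cfg-are-periodic}.
\item Track the state modulo the background phase. Show that after $2p$ steps (odd case) or $2p+6$ steps (even case) the configuration coincides with the initial one.
\item Exclude smaller periods. Any period must be divisible by $3$ by Lemma~\ref{lem:nec-cond-period}, so it suffices to rule out the proper divisors of $p'$ compatible with the return times of the components.
\end{enumerate}

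The main obstacle is the third step. It requires understanding exactly how the merged configuration evolves, and the earlier results give no such information. I would first verify the claim on small twins, $(3,6)$ giving $6$ and $(12,15)$ giving $30$, to identify the invariant that makes it work. Only then would I attempt the general argument.
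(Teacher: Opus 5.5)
Your $\lcm$ computations are correct, but the construction you fall back on cannot work, and step~3 is false as stated. By Lemmas~\ref{lem:periodicity-1}--\ref{lem:periodicity-4} (this is exactly the observation that opens Section~\ref{sec:merge-construction}), the non-background part of a finite configuration never leaves the window $[\min\vec c-1,\max\vec c]$. No activity spreads, so nothing travels to meet another pattern every $p$ steps.

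Consequently, if you place the twin configurations $\vec a$ and $\vec b$ side by side with a background gap, they evolve independently. The merged configuration then has period exactly $\lcm(p,p+3)=3k(k+1)$, which, as you showed yourself, differs from $p'$ except for $p\in\set{3,6}$. For example, with the twins $(27,30)$, after $54$ steps the period-$30$ block is not back to its initial state, since $30\nmid 54$. If instead you place them close enough to interact, you simply get a new, unrelated finite configuration, and nothing available controls its period.

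The missing idea is to merge with a different known period, namely $6$. It is a period: take for instance the twin pair $(3,6)$, or the family $1^{2k+1}$ with $k\ge 2$. Since $p$ and $p+3$ differ by $3$, exactly one of them is odd; call it $r$. By Lemma~\ref{lem:nec-cond-period}, $r$ is an odd multiple of $3$, so $\gcd(r,6)=3$ and $\lcm(r,6)=2r$. Proposition~\ref{mergeperiod} then shows that $2r$ is a period: $2p$ if $p$ is odd, and $2(p+3)=2p+6$ if $p$ is even.

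Your sense that the answer depends on $p \bmod 6$ is the right arithmetic. It is realized through $\lcm(\cdot,6)$, not through a phase interaction with the background. The twin hypothesis is used only to guarantee that an odd period is available. No new analysis of the dynamics is needed, and your step~4 is unnecessary, because the merge already produces a configuration of exact period $\lcm$.
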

We remark that the previous proposition allows to produce a new period but not a pair of twin periods. 

Finally, we observe that
since all periods of RESOCA $115$ are
multiples of $3$ then, we can write
$\Pi=3H$, where $\Pi$ is the set of
periods and $H$ is defined as follows \[
H=\set{k\in\N\;|\; 3k\text{ is a period of RESOCA $115$}}.
\]
Hence, the initial question can be reformulated as follows.
\makeatletter
\newcounter{tmp@questionstar}
\newtheorem{innerquestionstar}{Question}
\def\theinnerquestionstar{\arabic{innerquestionstar}*}
\newenvironment{questionstar}[1][]{%
\setcounter{tmp@questionstar}{\value{innerquestionstar}}
\ifx&#1&{}\else\setcounterref{innerquestionstar}{#1}\addtocounter{innerquestionstar}{-1}\fi
\innerquestionstar
}{\endinnerquestionstar\setcounter{innerquestionstar}{\value{tmp@questionstar}}}
\makeatother
\begin{questionstar}[question:are-all-n-congruent-to-0-periods]
Is $H=\N$?
\end{questionstar}

If the answer to the previous question
is negative then, more questions arise.
\begin{question}
Is $H$
syndetic (\ie does it have bounded gaps) or
thick (\ie does it contain arbitrary large intervals)?
\end{question}
Answering to these question is important
not only for the theoretical aspects
but also for the practical aspects
like the application to cryptography
that we propose at the end of this work, for example.
\section{Periodicity of finite configurations}
\label{sec:periodicity-of-finite-cfg}
We are going to decompose the proof of Theorem~\ref{th:all-finite-cfg-are-periodic}
into a series of Lemmata to make it easier to follow.
\begin{lemma}
\label{lem:periodicity-1}
 Consider a finite configuration $b$ such that $b_i=0$ for all $i\leq0$ and another finite configuration $c$ such that $c_i=0$ for all $i<0$ and $c_0=1$. Let $f$ be the global rule of \RESOCA $115$ and $d^t=f^t\left(\cvres{b}{c}\right)_2$. Then we have for all time $t\in\N$
 \[
 \forall i\in\N\setminus\set{0,1}\quad d^t_{-i}=
 \begin{cases}
 1&\text{if\ }t\equiv1\mod3\\
 0&\text{otherwise.}
 \end{cases}
 \]
 and for all times $t\geq2$
 \[
 d^t_{-1}= \begin{cases}
 d^{t-1}_{-1}&\text{if\ }t\equiv0\mod3\\
 1&\text{if\ }t\equiv1\mod3\\
 1-(d^{t-1}_0+d^{t-2}_{-1})&\text{if\ }t\equiv2\mod3
 \end{cases}
 \]
 where addition is taken modulo $2$.
\end{lemma}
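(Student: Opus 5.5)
The plan is to prove both displayed formulas together by strong induction on $t$. We work with the scalar recurrence hidden in the definition of \RESOCA $115$. Write $d^{-1}=b$ and $d^0=c$. Equation~\eqref{eq:EHOCA-subset} then gives, for all $t\geq0$ and $i\in\Z$,
\[
d^{t+1}_i = d^{t-1}_i + \delta_{115}\left(d^t_{i-1},d^t_i,d^t_{i+1}\right) \mod 2 .
\]
Reading Table~\ref{table:rule115}, $\delta_{115}(x,y,z)=1$ exactly when $y=0$, or when $y=1$, $x=1$ and $z=0$. Two facts do all the work. First, if the centre cell is $0$ the output is $1$ regardless of the neighbours. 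Second, if the centre is $1$ the output is $x(1-z)$. This is why the left region can be controlled even though its right neighbour ($-1$, and then $0$) is not known: the right neighbour matters only at times when the cell and its left neighbour are both $1$.

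For the base cases, the hypotheses on $b,c$ give $d^{-1}_i=0$ for $i\leq0$, and $d^0_i=0$ for $i<0$ with $d^0_0=1$. Every cell $i\leq-1$ has centre $0$ at time $0$, so $d^1_i=0+1=1$; this includes $i=-1$, where $d^1_{-1}=1$. One more step computes $d^2_i$ for $i\leq-2$ and $d^2_{-1}=1-(d^1_0+d^0_{-1})$. This checks the statement at $t=0,1,2$, with $t=0$ giving $0$, as required since $t\not\equiv1$.

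For the inductive step, assume both formulas hold up to time $t\geq2$ and consider $t+1$.

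For $i\leq-2$ we split on $t\bmod 3$.
\begin{itemize}
\item If $t\equiv0$, the centre is $0$, so $\delta=1$. With $d^{t-1}_i=0$ this gives $d^{t+1}_i=1$.
\item If $t\equiv1$, the centre and the left neighbour are both $1$, so $\delta=1-d^t_{i+1}$. Here $d^t_{i+1}=1$ by the induction hypothesis: this uses the first formula if $i+1\leq-2$, and the second formula's case $t\equiv1$ if $i+1=-1$. Hence $\delta=0$, and $d^{t+1}_i=d^{t-1}_i=0$.
\item If $t\equiv2$, the centre is $0$, so $\delta=1$. With $d^{t-1}_i=1$ this gives $d^{t+1}_i=0$.
\end{itemize}

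For $i=-1$ the left neighbour $d^t_{-2}$ is known, so $\delta$ is explicit. When $t\not\equiv1$ we have $d^t_{-2}=0$, hence $\delta=1-d^t_{-1}$ and $d^{t+1}_{-1}=d^{t-1}_{-1}+1+d^t_{-1}$. We split on $t\bmod3$ again.
\begin{itemize}
\item If $t\equiv2$, then $d^{t-1}_{-1}=1$, which gives $d^{t+1}_{-1}=d^t_{-1}$. This is the case $t+1\equiv0$ of the statement.
\item If $t\equiv0$, the induction hypothesis gives $d^t_{-1}=d^{t-1}_{-1}$, so $d^{t+1}_{-1}=1$. This is the case $t+1\equiv1$.
\item If $t\equiv1$, then $d^t_{-2}=d^t_{-1}=1$, so $\delta=1-d^t_0$. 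This gives $d^{t+1}_{-1}=1-(d^t_0+d^{t-1}_{-1})$, the case $t+1\equiv2$.
\end{itemize}

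The main obstacle is organizing the induction correctly. The formula for $i\leq-2$ at time $t+1$ needs the value of cell $-1$ at time $t$, and the formula for $-1$ needs cell $-2$. So the two statements must be proved simultaneously, and the second formula is only claimed from $t\geq2$, which forces some care with the small times. Beyond that, only the case split is needed.
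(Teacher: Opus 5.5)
Your proof is correct. You read the definition correctly as the recurrence $d^{t+1}_i=d^{t-1}_i+\delta_{115}(d^t_{i-1},d^t_i,d^t_{i+1})$ with $d^{-1}=b$ and $d^0=c$, and every case of your simultaneous induction checks out. The paper omits this proof for space, and an induction on $t$ split by $t \bmod 3$ is evidently the intended route.

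One point should be stated explicitly. Your induction hypothesis must include the base-case value $d^1_{-1}=1$, because you use it in the case $t\equiv 2$ at $t=2$, and the second formula is only asserted from $t\geq 2$. Note also that your argument never uses $b_0=0$ or $c_0=1$. It therefore proves the lemma whenever $b$ and $c$ vanish on all negative indices.
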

\begin{lemma}
\label{lem:periodicity-2}
 Consider two finite configurations $b$ and $c$
 such that, for both of them, all cells with an index strictly larger than $0$ are in state $0$. Let $f$ be the global rule of \RESOCA $115$ and $d^t=f^t\left(\cvres{b}{c}\right)_2$. Then we have, for all times $t\in\N$
 \[
 d^t_1=
 \begin{cases}
 1&\text{if\ }t\equiv1\mod3\\
 0&\text{otherwise.}
 \end{cases}
 \]
\end{lemma}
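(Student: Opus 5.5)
The plan is to prove a stronger statement by induction on $t$. The claim is that for every $i\geq 1$ (not only $i=1$) the whole right half-line behaves like the background. Write $g(t)=1$ if $t\equiv1\mod3$ and $g(t)=0$ otherwise. We show $d^t_i=g(t)$ for all $i\geq1$ and all $t\in\N$. We also use $d^{-1}:=b$, since by the definition of the global rule the first component of $f^t\left(\cvres{b}{c}\right)$ is $d^{t-1}$. The update then reads
\[
d^{t+1}_i = d^{t-1}_i+\delta_{115}\left(d^t_{i-1},d^t_i,d^t_{i+1}\right)\mod 2 .
\]
Working on all $i\geq1$ at once is what makes the induction close.

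The first step is to extract from Table~\ref{table:rule115} a convenient algebraic form of $\delta_{115}$. If the centre cell is $0$ the output is always $1$, since $101,100,001,000\mapsto1$. If the centre is $1$, the output is $x(1-z)$, since $111\mapsto0$, $110\mapsto1$, $011\mapsto0$, $010\mapsto0$. Hence
\[
\delta_{115}(x,y,z)=1-y+y\,x(1-z).
\]
The key observation is that the left neighbour $x$ matters only when $y=1$ and $z=0$.

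The base cases come from the hypotheses. For $i\geq1$ we have $d^{-1}_i=b_i=0$ and $d^0_i=c_i=0=g(0)$. For the inductive step, assume $d^{t-1}_j=g(t-1)$ and $d^t_j=g(t)$ for all $j\geq1$, and fix $i\geq1$. The neighbours $i$ and $i+1$ both lie in the half-line, so $d^t_i=d^t_{i+1}=g(t)$. If $g(t)=0$, the centre is $0$, the rule outputs $1$ whatever $d^t_{i-1}$ is, and $d^{t+1}_i=g(t-1)+1$. If $g(t)=1$, then $y=z=1$, the rule outputs $0$ regardless of $x$, and $d^{t+1}_i=g(t-1)$. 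This covers $i=1$ as well, where the left neighbour $d^t_0$ is arbitrary and unknown.

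Consequently the half-line obeys the recursion
\[
g(t+1)=g(t-1)+1-g(t)\mod 2, \qquad g(-1)=g(0)=0,
\]
with no dependence on the left part of the configuration. A direct check gives $g(1)=1$, $g(2)=0$, $g(3)=0$, $g(4)=1$. Since the pair $(g(t-1),g(t))$ returns to $(0,0)$ at $t=3$, the sequence is $3$-periodic and equals the background $[001001\dots]$ shifted as in the statement. Specialising to $i=1$ gives the lemma.

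The only delicate point is the independence from the left neighbour. It relies on the fact that when the centre is $1$ (times $t\equiv1\mod3$) the right neighbour is also $1$, which kills the $x$-term. This is exactly why the induction hypothesis must cover all $i\geq1$ rather than just cell $1$.
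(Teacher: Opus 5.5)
Your proof is correct and complete. The paper omits the proof of this lemma (it is deferred to a journal version), so there is no argument to compare step by step. What the paper does show is how it organises the result: the lemma is stated for column $1$ only, and the subsequent Remark extends it to every column $i>1$ by applying the lemma to $\sigma(b),\sigma(c)$.

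Your route reverses this. You prove the half-line statement $d^t_i=g(t)$ for all $i\geq1$ in a single induction on $t$, so the Remark becomes an immediate corollary. The paper's split gives a compact statement tailored to its later use in Theorem~\ref{th:all-finite-cfg-are-periodic}, with shift-invariance doing the extension cheaply.

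Your formula $\delta_{115}(x,y,z)=1-y+yx(1-z)$ shows that proving the whole half-line is necessary, not just convenient. At times $t\equiv1\pmod 3$ the update of cell $1$ depends on the arbitrary cell $0$ unless one already knows $d^t_2=1$. So an induction on column $1$ alone does not close. Any proof of the lemma as stated must carry the whole half-line, or equivalently all shifted pairs $\sigma^k(b),\sigma^k(c)$ with $k\geq0$, in its induction hypothesis. Your version makes this explicit. It also shows the lemma holds for arbitrary, not necessarily finite, $b$ and $c$.

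One point of wording could be tightened. The inductive step yields $d^{t+1}_i\equiv g(t-1)+1-g(t)$. The induction closes because $g$ itself satisfies $g(t+1)\equiv g(t-1)+1-g(t)\pmod 2$ for every residue of $t$ modulo $3$. Your computation of $g(1),\dots,g(4)$, together with the return of the pair to $(0,0)$, does establish this, but it should be stated in that form.
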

In other words, the previous lemma says that if we have an initial condition as described, then the column with index $1$ will show only the background.
\begin{remark}
Assume that $b$ and $c$ are two finite configurations satisfying the hypothesis
of the previous Lemma~\ref{lem:periodicity-2}. Then, $\sigma(b)$ and $\sigma(c)$ also satisfy the hypothesis. Hence, we can conclude that \cvres{b}{c} will show background for all columns with index bigger than
$1$.
\end{remark}
\begin{lemma}
\label{lem:periodicity-3}
 Consider two finite configurations $b$ and $c$ such that $b_i=c_i=0$ for all $i<0$ and $b_0=c_0=1$. Let $f$ be the global rule of \RESOCA $115$ and $d^t=f^t\left(\cvres{b}{c}\right)_2$. Then we have, for all time $t\in\N$
 \[
 \forall i\in\N\setminus\set{0}\quad d^t_{-i}=
 \begin{cases}
 1&\text{if\ }t\equiv1\mod3\\
 0&\text{otherwise.}
 \end{cases}
 \]
 and for all times $t\geq2$
 \[
 d^t_0= \begin{cases}
 d^{t-1}_0&\text{if\ }t\equiv0\mod3\\
 1&\text{if\ }t\equiv1\mod3\\
 1-(d^{t-1}_1+d^{t-2}_0)&\text{if\ }t\equiv2\mod3
 \end{cases}
 \]
 where addition is taken modulo $2$.
\end{lemma}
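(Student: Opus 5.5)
The plan is to prove both statements at once by strong induction on $t$, because they depend on each other. The background in the columns $-i$ ($i\geq1$) can only survive if cell $0$ is in state $1$ at every time $t\equiv1\pmod 3$, and the recurrence for cell $0$ in turn uses the fact that its left neighbour shows background.

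First I will write the dynamics as a second-order recurrence. With $d^{-1}=b$ and $d^0=c$, Equation~\eqref{eq:EHOCA-subset} gives
\[
d^{t+1}_i=d^{t-1}_i+\delta_{115}(d^t_{i-1},d^t_i,d^t_{i+1}) \bmod 2 .
\]
From Table~\ref{table:rule115}, $\delta_{115}(x,y,z)=1$ whenever $y=0$, and $\delta_{115}(x,1,z)=x\wedge\lnot z$. Two special cases will be used repeatedly:
\[
\delta_{115}(0,y,z)=1+y,\qquad \delta_{115}(1,1,z)=1+z .
\]
Define the background sequence $\beta^t=1$ if $t\equiv1\pmod3$ and $\beta^t=0$ otherwise. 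It satisfies $\beta^{t+1}=\beta^{t-1}+\delta_{115}(\beta^t,\beta^t,\beta^t)$, as one checks from $\delta_{115}(0,0,0)=1$ and $\delta_{115}(1,1,1)=0$. The hypotheses give $d^{-1}_{-i}=d^0_{-i}=0$ for $i\geq1$, which agrees with $\beta^{-1}=\beta^0=0$. They also give $d^0_0=1$ and $d^1_0=b_0+\delta_{115}(0,1,c_1)=1+0=1$.

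The induction step for the columns goes as follows, assuming both claims up to time $t$.
\begin{itemize}
\item For a cell $-i$ with $i\geq2$, all three neighbours show background at times $t$ and $t-1$, so the background identity gives $d^{t+1}_{-i}=\beta^{t+1}$.
\item For cell $-1$, the right neighbour is cell $0$, so I split on $t\bmod 3$.
\begin{itemize}
\item If $t\equiv0$ or $t\equiv2$, then $d^t_{-1}=0$, so $\delta_{115}=1$ regardless of cell $0$, and the value is correct.
\item If $t\equiv1$, then $d^t_{-2}=d^t_{-1}=1$, so $d^{t+1}_{-1}=0+\delta_{115}(1,1,d^t_0)=1+d^t_0$. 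This equals $0=\beta^{t+1}$ exactly when $d^t_0=1$.
\end{itemize}
\end{itemize}
This is the crux: the first statement at time $t+1$ needs the middle case of the second statement at time $t$. For $t=1$ it is the base computation $d^1_0=1$ above.

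For cell $0$ (for $t\geq2$) I use that $d^{t-1}_{-1}=\beta^{t-1}$ by induction, and split on $t\bmod3$.
\begin{itemize}
\item If $t\equiv0$: the left neighbour is $0$, so $d^t_0=d^{t-2}_0+1+d^{t-1}_0$. Since $t-2\equiv1$, the inductive hypothesis (or the base value $d^1_0=1$) gives $d^{t-2}_0=1$, hence $d^t_0=d^{t-1}_0$.
\item If $t\equiv1$: the left neighbour is $0$, and the same formula combined with $d^{t-1}_0=d^{t-2}_0$ (the $t\equiv0$ case at time $t-1$) yields $d^t_0=1$.
\item If $t\equiv2$: the left neighbour is $1$, and $d^{t-1}_0=1$ by the previous case (or $d^1_0=1$ when $t=2$). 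So $\delta_{115}(1,1,d^{t-1}_1)=1+d^{t-1}_1$, giving $d^t_0=1-(d^{t-1}_1+d^{t-2}_0)$.
\end{itemize}
For $t=2$ this uses $d^0_0=c_0=1$ and $d^1_0=1$.

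The main obstacle is organising this mutual dependence so that no circularity appears. In particular, the value $d^t_0=1$ for $t\equiv1$ must be established before it is used to propagate background in column $-1$ at time $t+1$. I will run the induction on $t$ with the joint statement ``both claims hold up to time $t$'' and handle $t=0,1$ as explicit base cases.
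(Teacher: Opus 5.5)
Your proof is correct. The joint strong induction on $t$ rests on the base values $d^0_0=d^1_0=1$ and on the observation that column $-1$ keeps the background at time $t+1$ with $t\equiv1\pmod 3$ exactly when $d^t_0=1$; it establishes both claims without circularity, and each case of the rule-$115$ arithmetic checks out. The paper omits this proof, deferring it to the journal version, so there is nothing to compare against, but the recursive form of the statement suggests that an induction on time like yours is the intended argument.
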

\begin{lemma}
\label{lem:periodicity-4}
  Consider two finite configurations $b$ and $c$ such that $b_i=c_i=0$ for all $i<0$ and $b_0=c_0=1$. Let $f$ be the global rule of \RESOCA $115$ and $d^t=f^t\left(\cvres{b}{c}\right)_2$. Then we have, for all times $t\geq3$
 \[
 \forall i\in\N\setminus\set{0,1}\quad d^t_{-i}=
 \begin{cases}
 1&\text{if\ }t\equiv1\mod3\\
 0&\text{otherwise.}
 \end{cases}
 \]
 and
 \[
 d^t_{-1}= \begin{cases}
 d^{t-1}_0&\text{if\ }t\equiv0\mod3\\
 1&\text{if\ }t\equiv1\mod3\\
 1-(d^{t-1}_1+d^{t-2}_0)&\text{if\ }t\equiv2\mod3
 \end{cases}
 \]
 where addition is taken modulo $2$.
\end{lemma}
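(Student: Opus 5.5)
The plan is a strong induction on $t$. It uses the local recurrence of \RESOCA $115$ and the explicit form of $\delta_{115}$, and it imports the column-$0$ information from Lemma~\ref{lem:periodicity-3}. From Table~\ref{table:rule115} one reads off $\delta_{115}(x,y,z)=1+y$ when $x=0$ and $\delta_{115}(x,y,z)=1+yz$ when $x=1$ (mod $2$). The second component therefore evolves by
\[
d^{t+1}_i=d^{t-1}_i+\delta_{115}\left(d^t_{i-1},d^t_i,d^t_{i+1}\right)\mod 2 .
\]
First I will record the background computation. If a cell and both its neighbours carry the background value $\beta_t$ (with $\beta_t=1$ iff $t\equiv1\mod3$) at time $t$, and the cell carried $\beta_{t-1}$ at time $t-1$, then the cell carries $\beta_{t+1}$ at time $t+1$. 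This is checked on the three residues, using $\delta_{115}(0,0,0)=1$ and $\delta_{115}(1,1,1)=0$, exactly as for \cvres{\cz}{\cz}.

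The base cases are $t=3,4,5$, with $t=2$ as auxiliary data. I will compute these directly from the hypotheses $b_i=c_i=0$ for $i<0$ and $b_0=c_0=1$, tracking only the finitely many columns $-2,-1,0,1$ that matter; everything further left is already background by the same argument as in Lemma~\ref{lem:periodicity-1}. This start-up phase is precisely why the statement is claimed only for $t\ge3$. For the inductive step on columns $i\ge2$: by induction the columns $-i-1,-i,-i+1$ all show background at time $t$, and column $-i$ showed it at time $t-1$. The background computation above then closes the step.

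The real work is column $-1$, whose right neighbour is column $0$. There I will split on $t\bmod 3$, substituting $d^t_{-2}=\beta_t$ and $d^t_{-1}$ from the induction hypothesis, and leaving $d^t_0$ free. When $t\equiv0$ the left neighbour is $0$, so $\delta_{115}=1+d^t_{-1}$. When $t\equiv1$ the left neighbour is $1$ and $\delta_{115}=1+d^t_{-1}d^t_0$. When $t\equiv2$ the left neighbour is again $0$. In each case I then rewrite the resulting expression using the recurrence for $d_0$ given by Lemma~\ref{lem:periodicity-3}, with $d_0$ replaced by $1-(d_1+d_0)$ at residue $2$ and held constant at residue $0$. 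The aim is to reach the three displayed cases, namely copying from column $0$ at one step earlier, forcing $1$, and the affine expression $1-(d^{t-1}_1+d^{t-2}_0)$.

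I expect the main obstacle to be this last bookkeeping: matching indices and time shifts between the recurrence for $d_{-1}$ and the one for $d_0$. In particular, the case $t\equiv0$ must produce exactly $d^{t-1}_0$. My first attempt will be to derive the formula as a shifted instance of Lemma~\ref{lem:periodicity-3}, viewing the configuration through $\sigma$ as in the Remark following Lemma~\ref{lem:periodicity-2}. Only if that shift does not align the residues will I fall back to the direct case-by-case evaluation described above.
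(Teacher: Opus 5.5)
There is a genuine gap, and it is not a matter of bookkeeping: the displayed recurrence for $d_{-1}$ is false as printed. So the step where you rewrite via Lemma~\ref{lem:periodicity-3} ``to reach the three displayed cases'' cannot succeed, whatever the index alignment.

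\textbf{The printed hypotheses.} With $b_0=c_0=1$, Lemma~\ref{lem:periodicity-3}, which you plan to import, already makes column $-1$ pure background. The case $t\equiv0$ would then force $d^{t-1}_0=0$ whenever $t-1\equiv2$. This fails already for $b=c=\cu$: one gets $d^1_i=1$ for all $i$ and $d^2=d^3=\cu$ (the configuration has period $3$). Hence $d^3_{-1}=0\neq1=d^2_0$, so the target fails at the base case $t=3$.

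\textbf{The case the theorem actually uses.} The proof of Theorem~\ref{th:all-finite-cfg-are-periodic} invokes this lemma for $b_0=1$, $c_0=0$. There Lemma~\ref{lem:periodicity-3} is not available (its hypothesis $c_0=1$ fails), so importing its column-$0$ recurrence is unjustified. The displayed formula also breaks: for $b=\cu$, $c=\cz$ one finds $d^3_{-1}=d^3_0=1$, $d^4_0=0$, $d^4_1=1$. Hence $1-(d^4_1+d^3_0)=1$, whereas $d^5_{-1}=0$. Likewise $d^8_0=0$ but $d^9_{-1}=1$. The right-hand sides look copied from Lemma~\ref{lem:periodicity-3} without shifting indices. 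Your $\sigma$-shift idea aligns the residues, but it yields the right-hand sides $d^{t-1}_{-1}$ and $1-(d^{t-1}_0+d^{t-2}_{-1})$, i.e.\ the formula of Lemma~\ref{lem:periodicity-1}, not the printed one.

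\textbf{What is true.} The statement that holds, and all that Theorem~\ref{th:all-finite-cfg-are-periodic} needs, requires only $b_i=c_i=0$ for $i<0$. Columns $\le-2$ show background for every $t$, and $d_{-1}$ obeys the recurrence of Lemma~\ref{lem:periodicity-1}.
\begin{itemize}
\item The key invariant is $d^t_{-1}=1$ for $t\equiv1\pmod 3$, proved jointly with the background claim. Column $0$ never enters its proof.
\item For $s\not\equiv2$, the left neighbour $d^{s-1}_{-2}$ is $0$, and $\delta_{115}(0,y,z)=1+y$ ignores $z$. So $d^s_{-1}=d^{s-2}_{-1}+1+d^{s-1}_{-1}$.
\item At $s\equiv0$ this gives $d^s_{-1}=d^{s-1}_{-1}$, using $d^{s-2}_{-1}=1$. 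At $s\equiv1$ it then gives $d^s_{-1}=1$. The base case is $d^1_{-1}=\delta_{115}(0,0,c_0)=1$.
\item At $s\equiv2$ one reads off $d^s_{-1}=1-(d^{s-1}_0+d^{s-2}_{-1})$ directly.
\end{itemize}
The invariant is exactly what keeps column $-2$ in background at $t\equiv1$, since there one needs $\delta_{115}(1,1,d^t_{-1})=1+d^t_{-1}=0$.

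\textbf{A second flaw.} This exposes an error in your inductive step for columns $-i$, $i\ge2$. For $i=2$ the right neighbour is column $-1$, which is \emph{not} background. So your claim that columns $-i-1,-i,-i+1$ all show background at time $t$ is false there. Columns $-2$ and $-1$ must be handled together, as above.
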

\begin{proof}[Proof of Theorem~\ref{th:all-finite-cfg-are-periodic}]
Consider two finite configurations $b$ and $c$. If $b=c=\cz$, then it is not difficult to see that the space-time diagram of \cvres{\cz}{\cz} contains only background columns and hence has period $3$. Now assume that at least one
of $b$ or $c$ is different from \cz.
Without loss of generality we can assume that for all $i<0$, $c_i=b_i=0$
(otherwise we can choose $c'=\sigma^u(c)$ and $b'=\sigma^u(b)$, where $u=\min \cvres{b}{c}$).
By Lemma~\ref{lem:periodicity-2}, all
columns with index larger than $\max \cvres{b}{c}$ are background.
We will distinguish three disjoint cases. For each of them, we will prove
that for all $i<-1$, all columns in the space-time diagram of \cvres{b}{c} contain only background. This and the previous remark allow to deduce that
the sequence of states that we can see in the space-time diagram of initial condition \cvres{b}{c} between cell $-1$ and cell $\max\cvres{b}{c}$ is ultimately periodic. However, since RESOCA 115
is reversible, we can conclude that the sequence is periodic. Here are the
proofs for the three cases announced above:
1) $b_0=0$ and $c_0=1$: this is Lemma~\ref{lem:periodicity-1};
2) $b_0=1$ and $c_0=0$: this is Lemma~\ref{lem:periodicity-4};
3) $b_0=1$ and $c_0=1$:
    this is Lemma~\ref{lem:periodicity-3}.
\end{proof}

Another interesting consequence of the lemmata in this section is that, for any finite configuration $\vec c$, the size of the configuration and its position are essentially preserved along the evolution. More precisely,
we can give the following.
\begin{corollary}
\label{cor:max-min-finite-cfg}
Let $f$ be the global rule RESOCA $115$. For any finite configuration
$\vec c$, we have
$\forall t\in\N\;\max\vec c=\max f^t(\vec c)$
and
$\forall t\in\N\setminus\set{0}\;\min f(\vec c)=\min f^t(\vec c)
$.
Moreover, $\min f(\vec c)-\min\vec c\leq1$.
\end{corollary}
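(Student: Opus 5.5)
Throughout, ``finite'' and $\max,\min$ should be read \emph{relative to the background}. Since $\delta_{115}(0,0,0)=1$, the configuration \cvres{\cz}{\cz} is not fixed, and so a literal reading cannot be intended. Accordingly, $\max f^t(\vec c)$ will be taken as the largest index $i$ such that the pair $(d^{t-1}_i,d^t_i)$ differs from the background pair at time $t$, and $\min$ is defined symmetrically. This is consistent with the Remark after Lemma~\ref{lem:periodicity-2} and with the proof of Theorem~\ref{th:all-finite-cfg-are-periodic}. As in that proof, I shift so that $\min\vec c=0$.

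\emph{Upper bound on $\max$.} By Lemma~\ref{lem:periodicity-2} and the Remark following it, every column with index larger than $\max\vec c$ shows only background. Hence $\max f^t(\vec c)\le\max\vec c$ for all $t$.

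\emph{Equality via reversibility.} The rule reads $d^{t+1}_i=d^{t-1}_i+\delta_{115}(d^t_{i-1},d^t_i,d^t_{i+1})$, so it can be solved backwards as $d^{t-1}_i=d^{t+1}_i+\delta_{115}(d^t_{i-1},d^t_i,d^t_{i+1})$. The inverse map is therefore again a RESOCA of the same form, with the two components swapped. The background is invariant under this reversal, up to a phase shift of the $001$ pattern. I would check that the argument of Lemma~\ref{lem:periodicity-2} is purely a right-hand-side induction and so transfers to the reversed dynamics. This gives $\max\vec c\le\max f^t(\vec c)$ by applying the inequality to the backward orbit of $f^t(\vec c)$, and hence equality. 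The main obstacle I expect is precisely this transfer: one must verify that the phase of the background does not spoil the hypothesis ``all cells beyond the max are background'' when time is reversed. If it does, I would fall back on a direct argument at column $\max\vec c$, showing from the local rule that this column cannot become background in both memory rows at once without being background at all times.

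\emph{Statements about $\min$.} For the left side I split into the three cases of the proof of Theorem~\ref{th:all-finite-cfg-are-periodic}, using Lemmas~\ref{lem:periodicity-1}, \ref{lem:periodicity-3} and \ref{lem:periodicity-4}. In each case all columns of index $<-1$ are background for all $t$, and in case~3 this also holds for column $-1$. This yields $\min f^t(\vec c)\ge\min\vec c-1$, and in particular $\min f(\vec c)-\min\vec c\le1$ in absolute terms.

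\emph{Stability of $\min$ from time $1$.} To show $\min f^t(\vec c)=\min f(\vec c)$ for $t\ge1$, I would use the explicit period-$3$ recurrences for the boundary column ($d_{-1}$ in cases~1 and~2, $d_0$ in case~3). Fix the boundary column, $-1$ in cases~1 and~2 and $0$ in case~3. The first step decides whether that column deviates from background. The recurrences then show that the column's deviation status is determined once and propagates along each block of three steps. Combined with the same reversibility trick applied to $f(\vec c)$, this gives that the leftmost non-background column stays put for all $t\ge1$.
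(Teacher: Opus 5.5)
Your reading of $\min$ and $\max$ relative to the background is the right one, and your argument for $\max$ works. The phase issue you worry about disappears. At every phase, $\delta_{115}(x,y,z)$ does not depend on $x$ when $(y,z)$ is on background, since $\delta_{115}(x,0,0)=1$ and $\delta_{115}(x,1,1)=0$. Moreover, the update $d^{t+1}_i+d^{t-1}_i=\delta_{115}(d^t_{i-1},d^t_i,d^t_{i+1})$ is symmetric in time. So a column whose right neighbour is always background, and which is background at two consecutive times, is background at all times in both directions. That is your fallback argument, and it gives $\max f^t(\vec c)=\max\vec c$ at once.

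The genuine gap is the step ``Stability of $\min$ from time $1$''. It cannot be repaired, because that part of the statement is false. On the left the boundary column is not self-contained: at times $t\equiv1\pmod 3$ its left neighbour is $1$, and $\delta_{115}(1,1,z)=1+z$ depends on column $0$. With your normalisation $\min\vec c=0$, put $\epsilon_k=d^{3k+2}_{-1}=d^{3k+3}_{-1}$. Column $-1$ equals $1$ at all times $\equiv1$, so these are its only possibly off-background values. The recurrences of Lemmas~\ref{lem:periodicity-1} and~\ref{lem:periodicity-4} give $\epsilon_0=1+d^1_0=b_0+c_0$ and $\epsilon_k=\epsilon_{k-1}+1+d^{3k+1}_0$. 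Hence column $-1$ is background in $f(\vec c)$ (as $d^0_{-1}=0$ and $d^1_{-1}=1$). It becomes off-background in $f^2(\vec c)$ whenever $b_0+c_0=1$. Afterwards it toggles each time column $0$ is off-background at a time $\equiv1\pmod 3$. Nothing is ``decided once''.

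Concretely, take $\vec c=\cvres{\cu}{\cz}$. Then $d^1$ is background except $d^1_0=0$, so $\min f(\vec c)=0$. But $d^2_{-1}=\delta_{115}(1,1,0)=1\neq0$, so $\min f^2(\vec c)=-1$. For $t=1,\dots,8$ the values of $\min f^t(\vec c)$ are $0,-1,-1,-1,0,0,0,-1$, and the orbit has period $15$. The failure persists even at times where $f^t(\vec c)$ is literally finite, so no reinterpretation is involved. There $f^3(\vec c)$ has ones exactly at $\{-1,0\}$ in both rows, while $f^6(\vec c)=\cvres{\cz}{\cu}$, so $\min f^3(\vec c)=-1\neq0=\min f^6(\vec c)$. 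The case $k=1$ of the paper's family $1^{2k+1}$ fails in the same way, with $\min f(\vec c)=-1$ and $\min f^2(\vec c)=-2$.

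What the lemmata do support on the left is $\min f^t(\vec c)\ge\min\vec c-1$ for all $t$, together with $\min f(\vec c)=\min\vec c$. The latter is the upper bound missing from your ``in absolute terms'' claim. It follows from $d^1_0=b_0+1+c_0$, which puts column $0$ off background in $f(\vec c)$ in all three cases. The paper gives no argument beyond ``follows rather directly from earlier results''. Its own Lemmas~\ref{lem:periodicity-1} and~\ref{lem:periodicity-4} exhibit precisely the dependence of $d_{-1}$ on $d_0$ that breaks the $\min$ claim. Only the $\max$ statement and the final inequality survive.
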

Although the previous result follows rather directly from earlier results, it is nevertheless of significant importance. Indeed, a result of Moothathu implies that the existence of periodic points with the properties described in the corollary guaranties the existence of infinitely many periodic
points with the same period~\cite{Moothathu2005}. Explicit examples of this fact are provided in Section~\ref{sec:families}.

\section{The merge construction}
\label{sec:merge-construction}
We dedicate to this construction a separated section since it is useful in different situations
when trying to build configurations with peculiar periods.

Consider two finite configurations $\vec c_1, \vec c_2$. By Theorem~\ref{th:all-finite-cfg-are-periodic}, these configurations
are periodic of
period $p$ and $q$, respectively.
Using Lemma~\ref{lem:periodicity-1} through~\ref{lem:periodicity-4},
we remark that along the evolution of
both configurations no modification (with respect to background) is made before $\min\vec c_1-1$ (resp., $\min\vec c_2-1$) or after $\max\vec c_1$ (resp.,
$\max\vec c_2$).
These facts suggest that if we build an initial configuration in which we position the non-background part of $\vec c_1$ on the left of the cell of index $0$ and $\vec c_2$ on the right, leaving a sufficient number of cells in state $0$ separating them, then we should have a new periodic configuration with the period which is some combination of the two original periods. More formally, define $u$ as the largest
integer such that $\max \sigma^u\left(\vec c_1\right)\leq-3$ and $v$ as the least integer such that $\min\sigma^v\left(\vec c_2\right)\geq3$.
Let us build a new finite configuration $\vec c_3$ that has a first component
\[
\forall i\in\Z\quad((\vec c_3)_1)_i=
\begin{cases}
\sigma^u((\vec c_1)_1)_i&\text{if }i\leq0\\
\sigma^v((\vec c_2)_1)_i&\text{if }i>0
\end{cases}
\]
and as a second component
\[
\forall i\in\Z\quad ((\vec c_3)_2)_i=
\begin{cases}
\sigma^u((\vec c_1)_2)_i&\text{if }i\leq0\\
\sigma^v((\vec c_2)_2)_i&\text{if }i>0.
\end{cases}
\]

\begin{proposition}
Let $\vec c_1, \vec c_2$ and $\vec c_3$ as above. Let $p$ (resp., $q$) be the period of $\vec c_1$ (resp., $\vec c_2$). Then, $\vec c_3$ is periodic of period $\lcm(p,q)$.
\end{proposition}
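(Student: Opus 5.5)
The plan is to show that the evolution of $\vec c_3$ decomposes into two independent pieces: on cells $i\leq 0$ it coincides with the evolution of $\sigma^u(\vec c_1)$, and on cells $i>0$ with that of $\sigma^v(\vec c_2)$. Since the global rule commutes with $\sigma$, these shifted configurations have periods $p$ and $q$. Once the decomposition holds for all times, $f^t(\vec c_3)=\vec c_3$ if and only if $t$ is a common multiple of $p$ and $q$. Indeed, the restriction of $f^t(\vec c_3)$ to $i\leq 0$ equals that of $f^t(\sigma^u(\vec c_1))$. Moreover, $f^t(\sigma^u(\vec c_1))$ agrees with $\sigma^u(\vec c_1)$ on $i>0$ exactly when $t\equiv 0\pmod 3$, because it shows background there. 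Every period is a multiple of $3$ by Lemma~\ref{lem:nec-cond-period}, so $f^t(\sigma^u(\vec c_1))=\sigma^u(\vec c_1)$ already forces $p\mid t$. The symmetric argument gives $q\mid t$. This yields minimal period $\lcm(p,q)$.

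The key step is an invariant, proved by induction on $t$. Let $A^t=f^t(\sigma^u(\vec c_1))$ and $B^t=f^t(\sigma^v(\vec c_2))$. The claim is: $f^t(\vec c_3)$ agrees with $A^t$ on all cells $i\leq 0$ and with $B^t$ on all cells $i\geq 1$. Furthermore, on a buffer zone around $0$ all three configurations display the background. I will take the buffer to be the cells $-2\le i\le 2$, which is where the choice $\max\sigma^u(\vec c_1)\le -3$ and $\min\sigma^v(\vec c_2)\ge 3$ enters. Concretely, by Lemma~\ref{lem:periodicity-2} and the subsequent remark, $A^t$ shows background at every column $i>\max\sigma^u(\vec c_1)$, hence at $-2,\dots,2$ and beyond. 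By Lemmas~\ref{lem:periodicity-1}, \ref{lem:periodicity-3} and \ref{lem:periodicity-4} (the case analysis in the proof of Theorem~\ref{th:all-finite-cfg-are-periodic}), $B^t$ shows background at every column $i<\min\sigma^v(\vec c_2)-1$, hence at $i\le 1$. Column $2$ needs separate care, since it equals $\min-1$ when $\min\sigma^v(\vec c_2)=3$. For the point below, the decisive columns are $\le 1$ for $B$ and $\ge -1$ for $A$.

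The induction step then uses locality. The new value at cell $i$ of $f(\vec x)$ depends only on $x^1_i$ and on $x^2_{i-1},x^2_i,x^2_{i+1}$. Take a cell $i\le 0$. Its neighbourhood lies in $\{\dots,1\}$. On cells $\le 0$ the configuration $f^t(\vec c_3)$ equals $A^t$ by induction. At cell $1$ it equals $B^t$, which is background there, and $A^t$ is also background at cell $1$. Hence the neighbourhoods coincide and $f^{t+1}(\vec c_3)_i=A^{t+1}_i$. Symmetrically, for $i\ge1$ the neighbourhood involves cell $0$, where $f^t(\vec c_3)$ equals $A^t$. This is background because $0>\max\sigma^u(\vec c_1)$, and $B^t$ is also background at $0$ because $0<\min\sigma^v(\vec c_2)-1$. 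So $f^{t+1}(\vec c_3)_i=B^{t+1}_i$. Both background facts hold at every time $t$ with identical phase, because all backgrounds are the common sequence $[001\dots]$ indexed by $t$. The base case $t=0$ is the definition of $\vec c_3$.

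I expect the main obstacle to be the bookkeeping of the background claims. Lemmas~\ref{lem:periodicity-1}–\ref{lem:periodicity-4} control only the second component $d^t$, and some statements hold only for $t\ge 2$ or $t\ge3$. I would first check that the first component is just $d^{t-1}$, so it also shows background (shifted by one step) at those cells. Then I would check that the small-$t$ exceptions concern only column $\min-1$ and $\min$, which lie strictly inside the region assigned to $B$, so they never enter the relevant neighbourhood. If the margin of $3$ turns out to be too tight at column $1$ for small $t$, I would verify those finitely many times directly from the rule table of $\delta_{115}$ on all-zero neighbourhoods.
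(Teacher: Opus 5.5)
Your proposal is correct and follows the same route the paper intends when it leaves the proof to the reader. By Lemmas~\ref{lem:periodicity-1}--\ref{lem:periodicity-4}, nothing departs from the background outside $[\min-1,\max]$, so the zero buffer lets the two halves evolve independently and the return times of $\vec c_3$ are exactly the common multiples of $p$ and $q$. One step needs tidying: to upgrade agreement on $i\le 0$ to $f^t(\sigma^u(\vec c_1))=\sigma^u(\vec c_1)$ you need $3\mid t$, and this must come from $f^t(\vec c_3)=\vec c_3$ itself, not from $3\mid p$. Either note that cell $0$ of $f^t(\vec c_3)$ shows the time-$t$ background state but equals $(0,0)$ in $\vec c_3$, or apply Lemma~\ref{lem:nec-cond-period} to $\vec c_3$.
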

\begin{proof}
The proof is trivial and left to the reader.
\end{proof}

Remark that the merge construction
can be generalized, in the obvious way,
to work with an arbitrary (but finite) number of finite configurations.


\section{Families of configurations and their period functions}
\label{sec:families}

In this section, we focus our attention on families of initial configurations. Since all initial configurations share the same memory $b=\cu$, each family will be described solely in terms of its set of current configurations. Every family is defined by a regular expression depending on a parameter $k\in\N$. For each value of $k$, the corresponding regular expression generates a binary word $w$ of odd length, from which the current current configuration is constructed by setting $c_{-k+i}=w_i$ for
$i\in[0,|w|]$, where $k=\frac{|w|-1}{2}$.
Our objective is to determine the \textbf{period function} $P(k)$ associated with the family, that is, the function assigning the initial configuration of parameter $k$ its corresponding period.

\subsection{Family $10^{2k-1}1$ ($k\geq1$)}
This is a very simple family of configurations as
it is confirmed by the following:
\begin{proposition}
\label{prop:family-10-pow(2k-1)1}
 The period function $P(k)$ of the family $10^{2k-1}$1 is as follows:
\[
P(k) =
\begin{cases}
2^{\,2k+4}-1,&\text{if\ } k=1,2\\[4pt]
15,&\text{otherwise}.
\end{cases}
\]    
\end{proposition}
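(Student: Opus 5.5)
The plan is to reduce Proposition~\ref{prop:family-10-pow(2k-1)1} to the merge construction of Section~\ref{sec:merge-construction}, which handles large $k$, and to treat the two small cases by direct computation. For parameter $k$ the initial configuration is $\vec c=\cvres{\cu}{c}$, where $c$ has exactly two ones, at positions $-k$ and $k$. I would view $\vec c$ as the superposition of three elementary finite configurations:
\[
\vec c_L=\cvres{\cz}{\sigma^{-k}(\cu)},\qquad
\vec c_M=\cvres{\cu}{\cz},\qquad
\vec c_R=\cvres{\cz}{\sigma^{k}(\cu)},
\]
supported at $-k$, $0$ and $k$ respectively. Here $\sigma^{j}(\cu)$ denotes the translate of $\cu$ whose single one sits at position $j$; the signs must be matched to the paper's convention for $\sigma$.

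The first step is to compute the periods of the three pieces. $\vec c_L$ and $\vec c_R$ are translates of $\cvres{\cz}{\cu}$, and $\vec c_M$ is $\cvres{\cu}{\cz}$. Each has a window of at most three non-background columns, by Lemmas~\ref{lem:periodicity-1}, \ref{lem:periodicity-2} and~\ref{lem:periodicity-4}. A short hand simulation of at most a few dozen steps should give their periods, which I expect to be in $\set{3,15}$, with least common multiple $15$. This is checkable using the explicit rule: $\delta_{115}(x,y,z)=1$ if $y=0$, and $\delta_{115}(x,1,z)=x(1-z)$.

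The second step, for $k\ge3$, is a three-piece version of the merge proposition. By Corollary~\ref{cor:max-min-finite-cfg} and the lemmata of Section~\ref{sec:periodicity-of-finite-cfg}, each piece never departs from background outside $[\min-1,\max]$. So the pieces stay inside the windows $[-k-1,-k]$, $[-1,0]$ and $[k-1,k]$. The key observation is that the background is the same sequence in every column. Hence a cell lying between two windows sees pure background from both pieces, and the sum configuration evolves as each piece would in isolation. This requires that no cell's radius-one neighbourhood meets two windows. That holds when $-k+1<-2$, i.e.\ $k\ge 3$, and symmetrically on the right, since $k-1\ge 2$. Then an induction on $t$ shows $f^t(\vec c)$ coincides on each window with $f^t$ of the corresponding piece, and is background elsewhere. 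Thus $f^t(\vec c)=\vec c$ if and only if all three pieces have returned, giving period $\lcm=15$.

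The third step handles $k=1,2$, where the windows overlap and interaction is genuine. Here I would use Lemmas~\ref{lem:periodicity-1} and~\ref{lem:periodicity-2} to confine the dynamics to the finite window $[-k-1,k]$, of width $2k+2$. I would then simulate the resulting finite reversible system until it returns, expecting $63$ and $255$; this is a finite verification, best tabulated by computer. The main obstacle I anticipate is making the non-interaction induction of the second step airtight. The difficulty is at the cells $-k+1$ and $-2$ bordering the gaps, where one must check that the memory component (the row $b$) also stays background there. I would handle this by invoking the column formulas of Lemmas~\ref{lem:periodicity-1}--\ref{lem:periodicity-4} for both rows at each boundary.
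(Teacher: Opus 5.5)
\textbf{Gap: the superposition step does not cover $k=3$.} The inequality $-k+1<-2$ is equivalent to $k>3$, i.e.\ $k\ge4$, not $k\ge3$. On the right you likewise need the gap $\set{1,\dots,k-2}$ to contain at least two cells, i.e.\ $k\ge4$, rather than $k-1\ge2$. For $k=3$ the windows are $[-4,-3]$, $[-1,0]$ and $[2,3]$, separated by the single cells $-2$ and $1$. Their neighbourhoods $\set{-3,-2,-1}$ and $\set{0,1,2}$ each meet two windows. So your criterion ``no neighbourhood meets two windows'' fails exactly at $k=3$. That case is then handled neither by step two nor by step three, although the proposition asserts $P(3)=15$.

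The rest of the plan is sound and the computations behave as you expect. A direct simulation shows that each piece has period exactly $15$ and stays at all times inside its two-cell window $[j-1,j]$. In fact $f^9\left(\cvres{\cz}{\cu}\right)=\cvres{\cu}{\cz}$, so $\vec c_M$ lies on the orbit of $\vec c_R$, shifted in time by a multiple of $3$. This also avoids relying on the paper's Lemma~\ref{lem:periodicity-4}, whose hypothesis is misprinted and which only claims something for $t\ge3$.

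\textbf{How to close the gap.} The claim is true at $k=3$, and you can repair this in either of two ways.

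\emph{Computation.} Add $k=3$ to the finite verification, on the window $[-4,3]$.

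\emph{Finer induction.} Use the shape of $\delta_{115}$:
\begin{itemize}
\item A gap cell is in background state. When that state is $0$, the rule outputs $1$ whatever the neighbours are.
\item When the background state is $1$ (times $t\equiv1\bmod 3$), the output $x(1-z)$ is $0$ regardless of $x$ as soon as the right neighbour $z$ is $1$.
\item From the explicit $15$-step orbit of $\cvres{\cz}{\cu}$, at times $t\equiv1\bmod3$ the only cell of a window $[j-1,j]$ that can differ from background is $j$. The left cell $j-1$ then always shows $1$. The same holds for $\vec c_M$, since its orbit is the same orbit shifted in time by $9$.
\end{itemize}
The right neighbour of each gap cell ($-1$, resp.\ $2$) is the left cell of a window. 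Hence the gap cells never leave the background, and the joint induction goes through. The three pieces therefore evolve independently at $k=3$ as well, giving $P(3)=\lcm(15,15,15)=15$.

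Finally, the memory row needs no separate boundary check. It is just $d^{t-1}$, so running the induction on the sequence $d^t$ over two consecutive times handles it automatically.
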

\subsection{Family $1^{2k+1}$}
Also this family of periods is pretty simple
but we need a different proof technique than
the previous one.

\begin{proposition}
\label{prop:family-1pow(2k+1)}
 The period function $P(k)$ of the family $1^{2k+1}$ is as follows:
 \[
 \forall k\in\N, P(k)=  
 \begin{cases}
 3&\text{if\ }k=0\\
 12&\text{if\ }k=1\\
 6&\text{otherwise.}
 \end{cases}
 \]
\end{proposition}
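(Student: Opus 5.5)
The plan is to compute the orbit of $\vec c_k=\cvres{\cu}{c}$ directly, where $c$ is $1^{2k+1}$ placed on $[-k,k]$. I will track both components as words over $\{0,1\}$ (regular expressions in $k$) for the first six time steps. The aim is to show $f^6(\vec c_k)=\vec c_k$ for all $k\geq 2$, and to treat the small cases $k=0,1$ by brute force.

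It helps to write the local rule explicitly. The new memory is the old current, and the new current is $b_i+\delta_{115}(c_{i-1},c_i,c_{i+1})$. From Table~\ref{table:rule115}, $\delta_{115}(x,y,z)=1-y\,(1-x+xz)$; equivalently it is $0$ exactly on $010$, $011$ and $111$. The consequences I will use are:
\begin{itemize}
\item inside a run of $1$s, $\delta_{115}=0$;
\item inside a run of $0$s, $\delta_{115}=1$;
\item the only ``irregular'' columns are those adjacent to a boundary between runs, and the center column $0$, where the memory $\cu$ carries a $1$.
\end{itemize}
By Lemma~\ref{lem:periodicity-2} and the remark after it, together with Lemma~\ref{lem:periodicity-4} (after shifting by $-k$), every column outside $[-k-1,k]$ is pure background for all times. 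So only the finite window $[-k-1,k]$ has to be followed.

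For $k\geq 2$, I will show by induction on $t=0,\dots,6$ that $f^t(\vec c_k)$ has a fixed shape. Each component is a concatenation of a left boundary block of bounded width, a uniform word $a^{k-2}$ or $(\text{short word})^{\ast}$, a center block around column $0$, a uniform middle word, and a right boundary block. The induction step applies the rule to each block. Uniform interior regions evolve uniformly: a $1$-run and a $0$-run each produce a constant column pattern in step with the background, and the rule has radius $1$. Hence at each step I only need to recompute the cells adjacent to the block junctions.

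The key point is that for $k\geq 2$ the perturbation coming from the center memory cell and the perturbations coming from the two ends of the run never share a neighborhood during one period. That is what makes the result independent of $k$ and equal to $6$. Having obtained $f^6(\vec c_k)=\vec c_k$, the period divides $6$. By Lemma~\ref{lem:nec-cond-period} it is a multiple of $3$, so it suffices to check $f^3(\vec c_k)\neq\vec c_k$. I will read this off the time-$3$ words, for instance at the center column.

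For $k=0$ and $k=1$ the window has width $2$ and $4$ respectively. Here I simply compute the orbit step by step: $3$ steps for $\cvres{\cu}{\cu}$ and $12$ steps for $k=1$. To get minimality I check that the orbit does not close earlier at $3$ or $6$; at $4$ it cannot, since the period must be a multiple of $3$.

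The main obstacle is the middle step: choosing the block decomposition so that it is genuinely closed under six iterations, and verifying that for $k=2$ the center and boundary disturbances, which then sit only two cells apart, do not interact. I would first run the $k=2,3,4$ diagrams explicitly to read off the junction words. Then I would verify the $k=2$ case separately if the uniform argument needs a spacing of $3$ or more.
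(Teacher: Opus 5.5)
The paper omits its proof of this proposition; it only remarks that a technique different from the one used for $10^{2k-1}1$ is needed. So your plan can only be checked on its own terms, and on those terms it works. A direct six-step computation for symbolic $k\ge2$ closes, and the small cases behave as claimed: for $k=0$ one has $f^3=\mathrm{id}$ on the configuration, and for $k=1$ the orbit returns at step $12$ but not at $3$ or $6$.

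The decomposition you were looking for is simpler than you expect. Read both components on cells $-k-1,\dots,k+1$, with background elsewhere. The orbit is then
\[
\begin{array}{c|c|c}
t & b^t & c^t\\ \hline
0 & 0^{k+1}\,1\,0^{k+1} & 0\,1^{2k+1}\,0\\
1 & 0\,1^{2k+1}\,0 & 1\,0^{k}\,1\,0^{k-1}\,1\,1\\
2 & 1\,0^{k}\,1\,0^{k-1}\,1\,1 & 1\,0^{k}\,1\,0^{k-1}\,1\,0\\
3 & 1\,0^{k}\,1\,0^{k-1}\,1\,0 & 1^{2k+2}\,0\\
4 & 1^{2k+2}\,0 & 1\,0^{k}\,1\,0^{k}\,1\\
5 & 1\,0^{k}\,1\,0^{k}\,1 & 0^{k+1}\,1\,0^{k+1}\\
6 & 0^{k+1}\,1\,0^{k+1} & 0\,1^{2k+1}\,0
\end{array}
\]
The blocks $[-k,-1]$ and $[1,k-1]$ are constant at every time, so no periodic ``short word'' is needed. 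The hypothesis $k\ge2$ is used exactly once, at step $2\to3$ in cell $k$. There the left neighbour in $c^2$ is $0$ when $k\ge2$, but it is the central $1$ when $k=1$. Since $\delta_{115}(0,1,0)=0\neq1=\delta_{115}(1,1,0)$, this is where $k=1$ leaves the pattern, which matches your spacing intuition.

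Two slips need fixing. First, the centre column cannot witness $f^3(\vec c_k)\neq\vec c_k$: for $k\ge2$ it equals $(1,1)$ at every time. Use column $-k-1$ instead, where $c^3_{-k-1}=1$, or column $k$, where $b^3_k=1$. Second, the lemma confining the non-background part to $[-k-1,k]$ is Lemma~\ref{lem:periodicity-1}, not Lemma~\ref{lem:periodicity-4}. After shifting the leftmost nonzero cell to the origin you have $b_0=0$, $c_0=1$ for $k\ge1$, and for $k=0$ Lemma~\ref{lem:periodicity-3} applies. In fact you need none of these lemmas. Checking that the cells just outside the window stay in the background cycle is part of the same finite verification.
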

\subsection{Family $1(01)^k$}
For this family, the period function appears to
be much more complex than the one of the two previous families.
Using a computer program, we produced a long sequence of
periods.
However, we have no clue for a closed
formula for the period function.

Looking more carefully, 
we remarked
that $P(2h+1)=4^{2h+3}-1$. We suspect
that this is true for any $h\in\N$. Hence, we formulate the following.
 
\begin{conjecture}
    \label{conjecture:base4power4}
    The period function $P(k)$ of
    the family $1(01)^k$ is such
    that $P(2i+1)=4^{2i+3}-1$
    for $i\in\N$.
\end{conjecture}

For even values of $k$, the period function $P(k)$ seems more complicate.
However, from our experiments we could
recognize a pattern.
For $i>2$, call $q_i$ the number obtained by adding $1$ to $P(2i)$ and then dividing the result by $8$.
We could remark that the representation in base $4$ of $q_i=(P(2i)+1)/8$ is given by
$1^i0^22^{i-2}3$ (the highest weight is at the left). Therefore, $q_i$
can be written as $q_i=3\cdot 4^0+2\cdot\sum_{j=1}^{i-2}4^j+\sum_{j=i+1}^{2i}4^j=3+\frac{4^{2i+1}-14\cdot4^{i-1}-8}{3}$. Recalling the relation between $q_i$ and $p_i$,
we get $P(2i)=8p_i-1=23+\frac{2\cdot4^{2i+2}-7\cdot4^{i+1}-4^3}{3}=\frac{2\cdot4^{2i+2}-7\cdot4^{i+1}+5}{3}$. We can therefore emit the
following.

\begin{conjecture}
    \label{conjecture:base4power4even}
    The period function $P(k)$ of
    the family $1(01)^k$ is such
    that $P(2i)=\frac{2\cdot4^{2i+2}-7\cdot4^{i+1}+5}{3}$
    for $i\in\N$.
\end{conjecture}
\section{Conclusions and perspectives}
In this paper, we investigated the behavior of RESOCA 115. In particular, we proved that, for every finite initial configuration, the evolution is periodic with period divisible by 3. We also identified families of finite configurations exhibiting either a simple period function or a considerably more intricate one. Nevertheless, the relationship between the structure of a finite initial configuration and the exact value of its period remains poorly understood. Clarifying this connection constitutes a challenging and promising direction for future research.

Through computational experiments, we identified the following periods:
$3, 6, 9^*, 12, 15, 27, 30, 54,\break 60, 63, 123, 126, 135, 189, 246, 252, 255, 270, 315, 378, 507, 510, 615, 630, 1020, 1023, 1230, 1260, 1890,\break 2043, 2046, 2535, 2583$ and many others. We emphasize that every period in this list arises from a finite configuration, with the notable exception of period 9. Indeed, rather curiously, we were only able to construct non-finite configurations exhibiting this period.

This observation prevents the use of the merge construction to fill the gap at period $18$. More precisely, the technique employed to establish the existence of period $9$ does not extend to period $18$, as it would require the analysis of graphs with $2^{72}$ vertices and $2^{74}$ edges; far beyond the capabilities of our machines\footnote{The technique used for period $9$ was suggested by Jarkko Kari in a private communication.}. Developing more efficient computational methods for proving or disproving the existence of a prescribed period therefore constitutes another compelling direction for future research.

All the hype surrounding the periods is that if we succeed in getting a certain amount of knowledge about them, then we could provide a concrete playground for a new secret sharing protocol in which the shares are given by some finite configurations and the secret is hidden in the period of a configuration obtained by the shares via the merge construction.
We highlight several advantages of this new protocol. Suppose there are n shares, where $n$ itself is secret. In this setting, the presence of all shares is not necessarily required to decrypt the secret; however, one is not completely sure that the decryption is correct until every share has been taken into account. Moreover, the size of each share can be chosen to be logarithmic in the bit-size of its period, and the order in which the shares are processed is irrelevant.

A thorough analysis of this protocol is currently under investigation and will be presented in a separate forthcoming paper.

%
\bibliographystyle{eptcs}
\bibliography{biblio.bib}
\end{document}